\def\di{\displaystyle}
\theoremstyle{plain}
\newtheorem{theorem}{Theorem}[section]
\newtheorem{lemma}[theorem]{Lemma}
\newtheorem{proposition}[theorem]{Proposition}
\theoremstyle{definition}
\newtheorem{definition}[theorem]{Definition}
\theoremstyle{remark}
\newtheorem{remark}{Remark}
\begin{document}

\title{Dynamics of a rotating ellipsoid with a stochastic flattening}

\author{Etienne Behar$^1$, Jacky Cresson$^{1,2}$ and Fr\'ed\'eric Pierret$^1$}
\subjclass[2010]{60H10; 60H30; 65C30; 92B05}
\keywords{Invariance criteria; stochastic differential equations, model validation, stochastic models in astronomy, celestial mechanics}

\begin{abstract}
Experimental data suggest that the Earth short time dynamics is related to stochastic fluctuation of its shape. As a first approach to this problem, we derive a toy-model for the motion of a rotating ellipsoid in the framework of stochastic differential equations. Precisely, we assume that the fluctuations of the geometric flattening can be modeled by an admissible class of diffusion processes respecting some invariance properties. This model allows us to determine an explicit drift component in the dynamical flattening and the second zonal harmonic whose origin comes from the stochastic term and is responsible for short term effects. Using appropriate numerical scheme, we perform numerical simulations showing the role of the stochastic perturbation on the short term dynamics. Our toy-model exhibits behaviors which look like the experimental one. This suggests to extend our strategy with a more elaborated model for the deterministic part.
\end{abstract}

\maketitle

\vskip 5mm
\begin{tiny}
	\begin{enumerate}
		\item {Laboratoire de Math\'ematiques Appliqu\'ees de Pau, Universit\'e de Pau et des Pays de l'Adour, avenue de l'Universit\'e, BP 1155, 64013 Pau Cedex, France}
		
		\item {SYRTE UMR CNRS 8630, Observatoire de Paris and University Paris VI, France}
	\end{enumerate}
\end{tiny}
\vskip 5mm

\tableofcontents

\section{Introduction}

The irregularities in the Earth's rotation axis direction and norm originate in various complex phenomena such as interaction with Solar system bodies, mass redistribution in the oceans and the atmosphere, as well as interactions between the various internal layers (see \cite{lambeck_1989}, Figure \ref{fig:poleyear} and Figure \ref{fig:j2}).\\

\begin{figure}[ht!]
	\begin{subfigure}[t]{0.45\textwidth}
		\centering
		\includegraphics[height=0.8 \textwidth]{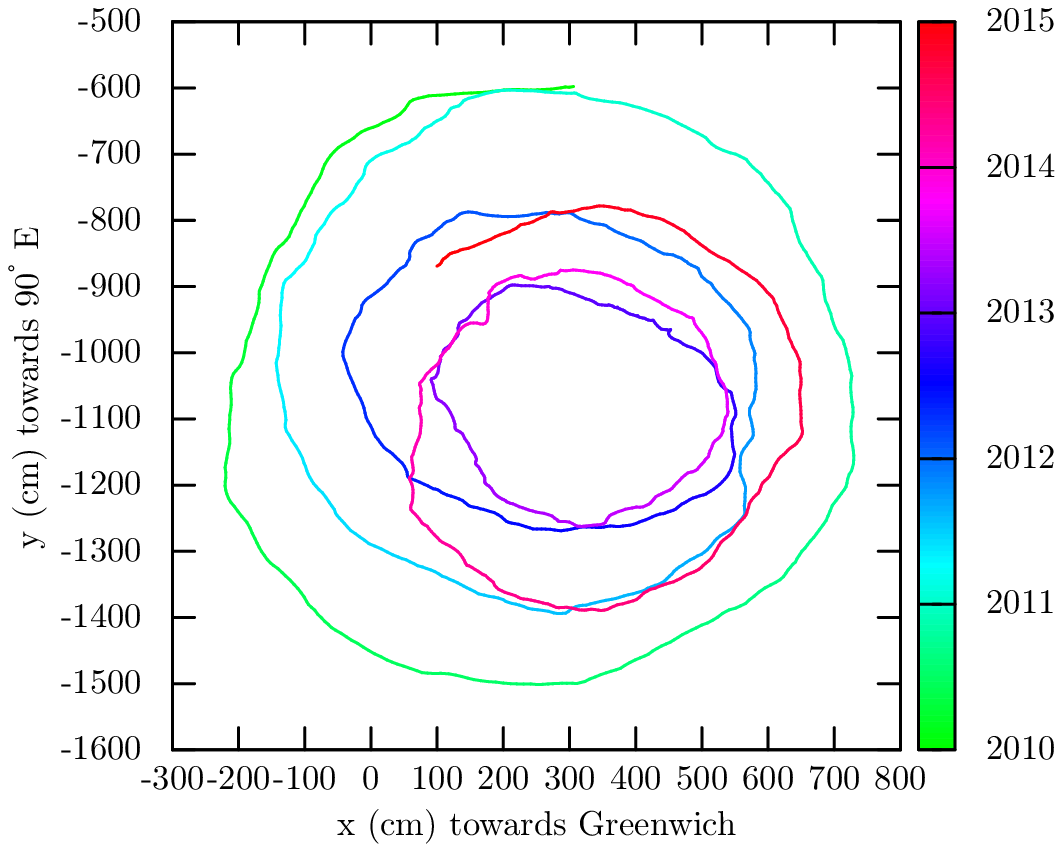}
		\caption{Evolution of the position of the Earth's pole over few years, \tiny http://hpiers.obspm.fr/eop-pc/}
		\label{fig:poleyear}
	\end{subfigure}
	\quad 
	\begin{subfigure}[t]{0.45\textwidth}
	\centering
	\includegraphics[height=0.8\textwidth]{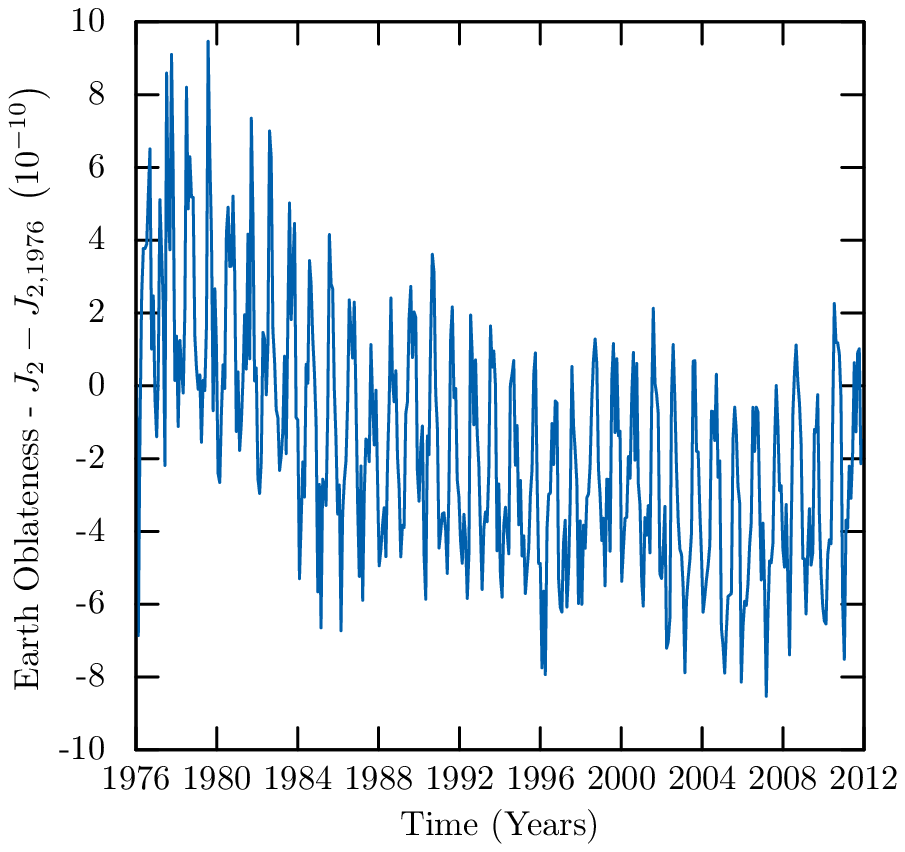}
	\caption{Earth's second degree zonal harmonic $J_2$, \tiny http://grace.jpl.nasa.gov/data/J2/}
	\label{fig:j2}
	\end{subfigure}
	\caption{Dynamical illustrations of the Earth.}
\end{figure}

In order to take into account these irregularities, classical models of Earth's dynamics are constructed on geophysical considerations such as oceans and atmosphere dynamics and the Earth's neighborhood like the Moon, the planets and the Sun (see \cite{bizouard}, \cite{Jin20131}, \cite{chao1993}, \cite{barnes1983}, \cite{lambeck_1989},  \cite{sidorenkov2009}). In these models, the short time (diurnal and subdiurnal time-scale) irregularities are badly modeled due to the complexity of the phenomena (see \cite{viron2005}, \cite{yseboodt2002}, \cite{cheng_variation_obl}). With this problem, we are led to the following question : \textit{Is an alternative approach of Earth's rotation model possible ?} \\

The complex mechanisms underlying the irregularities in the Earth's rotation strongly suggest modeling the Earth's rotation, over short time scales, with random processes. Indeed, the short time variations of the \emph{rotation speed}, \emph{length of the day} and \emph{polar motion} are strongly correlated with the time variations in the dynamics of the ocean and the atmosphere, for periods of order between the day and the year. It has been observed that these short time variations seem to be of stochastic nature (see \cite{eubanks1988}, \cite{lambeck_1989}, \cite{sidorenkov2009}). This induces strong changes in the modeling process. An example of such considerations is the two-body problem with a stochastic perturbation studied in \cite{cpp}. Up to now, the stochastic behavior has been taken into account using \emph{filtering theory} which consists in adding noises governed by constants and adjust them to best estimate the observations (see \cite{hamdan1996},\cite{markov2005}, \cite{chin2005}, \cite{chin2009}). Such a method, although effective, can not be used to determine how a given stochastic perturbation impacts the other quantities of interest. Indeed, the form of the stochastic process is not explicitly related to the physical parameters entering in the physical process as the experimental data mix phenomenon of different origins. \\

In this work, we model an oblate homogeneous ellipsoid of revolution, which could represent the Earth, whose geometric flattening is varying and contains a stochastic component. We use the framework of stochastic differential equations in the sense of It\^o. The major difference with an approach through filtering theory is that we are looking for a physical phenomenon linked to the ellipsoid itself by explicit formula. This allows us to identify the influence of the stochastic term on the stochastic variation of the flattening, which is responsible of the stochastic behavior observed in the zonal harmonic $J_2$ and the length of the day. Precisely, we obtain explicit formulas relating the stochastic variation of the geometric flattening to the stochastic fluctuation of the dynamical flattening and the second zonal harmonic (see Proposition \ref{H_drift} and Lemma \ref{j2_drift}). This result is consistent with the expected interactions between these different phenomenon, in particular for what concerns the length of the day.\\

It must be noted that adding a stochastic contribution to a known deterministic model is not easy. At least two difficulties must be overcome :
\begin{itemize}
\item First, one must find expressions of quantities of interest which can be computed in the stochastic setting. In our case, we are concerned with adding random fluctuations of the ellipsoid shape to an existing deterministic models. A non exhaustive list of models are given in \cite{barnes1983}, \cite{chao1993}, \cite{vermeersen1999}, \cite{getino1990}, \cite{getino1991}. As a first approach, we restrict our attention to the Euler-Liouville equation (see Section \ref{det_eq_motion}).
\item Second, one must be careful with the stochastic component entering in the geometric flattening. Indeed, without any assumptions, a stochastic process induces unbounded fluctuations leading to unrealistic values.  Then, one must construct an ``admissible'' stochastic deformation having bounded variations with probability one (see Section \ref{adm_stoc_def}).
\end{itemize}

Another difficulty deserves to be mentioned and concerns the numerical study of such kind of models. Indeed, classical numerical schemes do not preserve in general the specific constraints of a model. For example, the usual Euler-Maruyama scheme destroys the invariance condition used to construct admissible stochastic deformations, leading to inconsistent results, even for a short time simulation. This can be overcome using an appropriate time step during the numerical integration (see Section \ref{num_inv} and \cite{pierret_nsem}).\\

The plan of this paper is as follows : \\

In Section 2, we remind the classical equations of motion for a rigid ellipsoid. Section 3 deals with the case of an ellipsoid with a time variable flattening : deterministic or stochastic. In particular we discuss the notion of admissible deformations based on the invariance criterion for (stochastic) differential equations. Section 4 is devoted to the numerical exploration of a toy-model obtained by a particular deformation equation of the flattening. In Section 5 we conclude and give some perspectives.

\section{Free motion of a rigid ellipsoid}

In this section we remind the equations of motion for a rotating homogeneous rigid ellipsoid. We refer to Chapter 4 and 5 of \cite{goldstein}, Chapter 6 of \cite{landau} and Chapter 3 of \cite{lambeck1988} for full details.\\

We consider an ellipsoid of revolution $\mathcal{E}$ of major axis $a$ and $c$ of mass $M_\mathcal{E}$ and volume $V_\mathcal{E}$. Let \textbf{L} be the angular momentum of $\mathcal{E}$ with $\mathbf{L}=\mathbf{I}\mathbf{\Omega}$ where \textbf{I} is the inertia matrix of $\mathcal{E}$ and $\mathbf{\Omega}$ is the rotation vector. The equation of free motion for $\mathcal{E}$ is 

\begin{equation}
	\frac{d\mathbf{L}}{dt} + \mathbf{\Omega} \wedge \mathbf{L} = 0 \label{eqmotion}.
\end{equation}

We remind that free motion means that there is no external moments acting on the body $\mathcal{E}$. In the principal axes which are the reference frame attached to the center of $\mathcal{E}$ and where the inertia is diagonal whose coefficients are directly linked to the major axis $a$ and $c$. Indeed, in the case of an ellipsoid of revolution, the inertia matrix is expressed as 

\begin{equation}
	{\bf I} = 
	\left( \begin{array}{ccc}
		I_{1} & 0 & 0 \\
		0 & I_{2} & 0 \\
		0 & 0 & I_{3}
	\end{array} \right)
\end{equation}
where $I_2=I_1$, $I_1 = \frac{1}{5} M_\mathcal{E} (a^2+c^2) \label{I1}$ and $I_3=\frac{2}{5}M_\mathcal{E}a^2 \label{I3}$. The volume satisfies the classical formula $V_\mathcal{E}=\frac{4}{3}\pi a^2 c \label{V}$. In the principal axes, the equation of free motion is expressed as
\begin{equation}
	\begin{array}{lll}
		\frac{d\Omega_1}{dt} & =&  \frac{I_1 - I_3}{I_1}\Omega_3 \Omega_2, \\
		\frac{d\Omega_2}{dt} & = & -\frac{I_1 - I_3}{I_1}\Omega_3 \Omega_1, \\
		\frac{d\Omega_3}{dt} & = & 0 .
	\end{array}
\end{equation}
Theses equations are the well known {\it Euler-Liouville equations} of a body in rotation in the case of an ellipsoid of revolution. \\

Multiple definitions related to the characterization of an oblate homogeneous rigid ellipsoid exist. We remind the three most used (see \cite{bizouard}, Appendix C and \cite{lambeck1988}, Eq. 2.4.6,  2.4.7):
\begin{itemize}
	\item The \emph{geometric flattening} $f$ which is the quantity related to the major-axis as $\displaystyle \frac{a-c}{a}$,
	\item The \emph{dynamical flattening} $H$ which is the quantity related to the inertia coefficients as $\frac{I_3-I_1}{I_3}$,
	\item The \emph{second degree zonal harmonic} $J_2$ which is the quantity related to the inertia coefficients and major-axis as $\frac{I_1-I_3}{Ma^2}$.
\end{itemize} 

The flattening that we consider is a geometric variation, a temporal evolution of his shape. In consequence, we always refer to the \emph{geometric flattening} when we discuss about the \emph{flattening}.

\section{Motion of an ellipsoid with time-varying flattening}

We are interested in variations of the flattening and we want to derive the perturbed Euler-Liouville equation of motion under the following assumptions:
\begin{align}
	&\text{(H1) Conservation of the ellipsoid mass} \ M_\mathcal{E}. \\
	&\text{(H2) Conservation of the ellipsoid volume} \ V_\mathcal{E}, \\
	&\text{(H3) Bounded variation of the flattening}.
\end{align}
Those assumptions are physically consistent with observations and the physical considerations as we are only interested in a first approach by the effect of a homogeneous flattening. \\

The entire dynamic will be encoded and described with the major axis $c_t$ through the formula of the inertia matrix and the volume. The basic idea to approach variation of the flattening is that there exists a "mean" deformation of the flattening and a lower and an upper variation around it. Characterization of admissible deformations under the assumptions (H3) depends on its nature, i.e. deterministic or stochastic.

\subsection{Motion of an ellipsoid with deterministic flattening}

\subsubsection{Deterministic variation of the flattening}

Let $c_t$ satisfying the differential equation
\begin{equation}
	\frac{dc_t}{dt} = f(t,c_t)
\end{equation}
where $f\in \mathcal{C}^2(\mathbb{R}\times\mathbb{R},\mathbb{R})$. \\

{\it Consequence of assumption (H1)} : Computing the derivative of the volume formula (\ref{V})
\begin{equation}
	a_t^2=\frac{3V_\mathcal{E}}{4\pi} \frac{1}{c_t},
\end{equation}
we obtain
\begin{equation}
	\frac{d(a_t^2)}{dt} = \frac{3V_\mathcal{E}}{4\pi} \left( -\frac{1}{c_t^2}\frac{dc_t}{dt} \right).
\end{equation}
Thus using the expression of $\frac{dc_t}{dt}$ we obtain the following lemma :

\begin{lemma}
Under assumption (H1) the variation of $a$ is given by
\begin{equation}
	\frac{d(a_t^2)}{dt} = \frac{3V_\mathcal{E}}{4\pi}\left( -\frac{f(t,c_t)}{c_t^2} \right).
\end{equation}

\end{lemma}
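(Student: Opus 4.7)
The plan is straightforward: the lemma is essentially a one-step chain-rule computation, and the whole content lies in correctly exploiting assumption (H1). The key observation is that the volume formula $V_\mathcal{E} = \frac{4}{3}\pi a_t^2 c_t$ is a purely algebraic relation between $a_t^2$ and $c_t$, which under (H1) can be inverted to express $a_t^2$ as a smooth function of $c_t$ alone.

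First I would use (H1) to treat $V_\mathcal{E}$ as time-independent and solve the volume identity for $a_t^2$, obtaining $a_t^2 = \frac{3V_\mathcal{E}}{4\pi}\, c_t^{-1}$. Second, I would differentiate this identity in $t$ using the chain rule, which yields $\frac{d(a_t^2)}{dt} = -\frac{3V_\mathcal{E}}{4\pi c_t^{2}}\,\frac{dc_t}{dt}$; the differentiability of $c_t$ needed here follows from the Cauchy--Lipschitz theorem applied to the evolution equation $\frac{dc_t}{dt} = f(t,c_t)$ since $f \in \mathcal{C}^2(\mathbb{R}\times\mathbb{R},\mathbb{R})$. Finally I would substitute $\frac{dc_t}{dt} = f(t,c_t)$ into this expression to obtain the claimed formula.

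There is essentially no obstacle: the only real ingredient is (H1), and once $V_\mathcal{E}$ is recognized as a genuine constant the argument reduces to elementary calculus. The one technical point worth flagging is that the manipulation implicitly requires $c_t > 0$ for all $t$ of interest (so that one may divide by $c_t$ and by $c_t^2$); this is a physical nondegeneracy condition on the ellipsoid which must be ensured by the chosen $f$, and which will become a genuine constraint later when assumption (H3) on the boundedness of the flattening is enforced.
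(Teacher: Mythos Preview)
Your proposal is correct and follows exactly the paper's own argument: invert the volume constraint to write $a_t^2 = \frac{3V_\mathcal{E}}{4\pi}\,c_t^{-1}$, differentiate in $t$, and substitute $\frac{dc_t}{dt}=f(t,c_t)$. Your added remarks on differentiability of $c_t$ and the positivity condition $c_t>0$ are a bit more careful than the paper, but the approach is identical.
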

We can now determine the variation of the inertia matrix coefficients $I_1$ and $I_3$. 

{\it Consequence of assumption (H2)} : Computing the derivative of the expression of $I_1$ and $I_3$ gives the following lemma

\begin{lemma}
	Under assumption (H2) the variation of $I_1$ and $I_3$ are given by
	\begin{equation}
		\frac{dI_3}{dt} = \frac{3 M_\mathcal{E} V_\mathcal{E}}{10\pi}\left( -\frac{f(t,c_t)}{c_t^2} \right)
	\end{equation}
	and
	\begin{equation}
		\frac{dI_1}{dt} = \frac{M_\mathcal{E}}{5} \left(-\frac{3V_\mathcal{E}}{4\pi}\frac{f(t,c_t)}{c_t^2} + 2 c_t f(t,c_t)\right).
	\end{equation}
\end{lemma}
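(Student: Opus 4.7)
The plan is that this lemma should be a direct consequence of the chain rule applied to the closed-form expressions $I_3 = \tfrac{2}{5} M_\mathcal{E} a_t^2$ and $I_1 = \tfrac{1}{5} M_\mathcal{E}(a_t^2 + c_t^2)$, combined with the preceding lemma that already furnishes $\frac{d(a_t^2)}{dt}$. The assumption (H2) enters only implicitly: it is what allowed the previous lemma to replace $a_t^2$ by $\tfrac{3V_\mathcal{E}}{4\pi c_t}$, so that $a_t^2$ becomes a function of $c_t$ alone. Since $M_\mathcal{E}$ is also constant by (H1), no further hypothesis is needed for the time derivatives.

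First I would handle $I_3$. Differentiating $I_3 = \tfrac{2}{5} M_\mathcal{E} a_t^2$ in time and plugging in the expression for $\frac{d(a_t^2)}{dt}$ given by the previous lemma gives
\[
\frac{dI_3}{dt} \;=\; \frac{2}{5} M_\mathcal{E} \cdot \frac{3V_\mathcal{E}}{4\pi}\left(-\frac{f(t,c_t)}{c_t^2}\right) \;=\; \frac{3 M_\mathcal{E} V_\mathcal{E}}{10\pi}\left(-\frac{f(t,c_t)}{c_t^2}\right),
\]
which is the first claim.

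Next I would treat $I_1$. Differentiating $I_1 = \tfrac{1}{5} M_\mathcal{E}(a_t^2 + c_t^2)$ produces two contributions: the $a_t^2$ derivative is again supplied by the previous lemma, while $\frac{d(c_t^2)}{dt} = 2 c_t \frac{dc_t}{dt} = 2 c_t f(t,c_t)$ by the defining ODE for $c_t$. Summing the two terms and factoring out $\tfrac{M_\mathcal{E}}{5}$ yields exactly the stated formula for $\frac{dI_1}{dt}$.

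Since everything reduces to applying one known identity and one differentiation, there is essentially no obstacle; the only point to be careful about is sign bookkeeping when substituting the expression for $\frac{d(a_t^2)}{dt}$, and confirming that the lemma's prefactor $\tfrac{3V_\mathcal{E}}{4\pi}$ is carried through correctly in both computations.
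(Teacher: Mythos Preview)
Your proposal is correct and follows exactly the approach the paper indicates: the paper's entire justification is the single sentence ``Computing the derivative of the expression of $I_1$ and $I_3$ gives the following lemma,'' and your chain-rule computation using the preceding expression for $\frac{d(a_t^2)}{dt}$ is precisely that computation carried out in detail. Your remark about how (H1) and (H2) enter is also accurate and slightly more careful than the paper's own labeling.
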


\subsubsection{Deterministic equations of motion}
\label{det_eq_motion}
In order to formulate the equations of motion of $\mathcal{E}$ with a deterministic flattening, we first rewrite the equations of motion as
\begin{equation}
	\frac{dL_i}{dt} = l_i(\mathbf{I},\mathbf{\Omega}) ,
\end{equation}
with $l_1({\bf I},\boldsymbol \Omega) = (I_1 - I_3) \Omega_2\Omega_3$, $ l_2({\bf I},\boldsymbol \Omega) =  - (I_1 - I_3) \Omega_1\Omega_3$ and $l_3({\bf I},\boldsymbol \Omega)=0$.

Taking into account our deterministic variation of the flattening, we get the full set of the deterministic equations of motion for $\mathcal{E}$ as
\begin{equation}
	\begin{array}{lll}
		\frac{dL_i}{dt} & = & l_i({\bf I},\boldsymbol \Omega),\\
		\frac{dI_i}{dt} & = & k_i(c_t) \nonumber, 
	\end{array}
\end{equation}
for $i=1,2,3$ where
\begin{equation}
	\begin{array}{lll}
		k_1(c_t)&=&\frac{M_\mathcal{E}}{5} \left(-\frac{3V_\mathcal{E}}{4\pi}\frac{f(t,c_t)}{c_t^2} + 2 c_t f(t,c_t)\right), \\
		k_3(c_t)&=&\frac{3 M_\mathcal{E} V_\mathcal{E}}{10\pi}\left( -\frac{f(t,c_t)}{c_t^2} \right), \\
		k_2(c_t)&=& k_3(c_t).
	\end{array}
\end{equation}
A deterministic version of the Euler-Liouville equation induced by the deterministic flattening can then be obtained. As we consider only variation of the flattening, we still have a rotational symmetry . Hence, we have $L_i = I_i \Omega_i$  or equivalently $\Omega_i = \frac{L_i}{I_i}$ for $i=1,2,3$. Computing the derivative for each component of $\mathbf{\Omega}$ we obtain the following definition :

\begin{definition}
	We call Deterministic Euler-Liouville equations for an ellipsoid with a deterministic flattening the following equations
	\begin{equation}
		\begin{array}{lll} 
			\frac{d\Omega_i}{dt} &= \left(\frac{l_i({\bf I},\boldsymbol \Omega)}{I_i} - \frac{\Omega_i}{I_i^2}k_i(c_t)\right), \\
			\frac{dI_i}{dt} & = k_i(c_t), \\
			\frac{dc_t}{dt} &= f(t,c_t)
		\end{array}
	\end{equation}
	for $i=1,2,3$.
\end{definition}

\subsubsection{Admissible deterministic deformations}

We give the form of the differential equations governing a deformation respecting assumption $(H3)$ in the deterministic case.

\begin{definition}
	Let $d_{min}<0$ and $d_{max}>0$ fixed values which correspond to the minimum and maximum variation with respect to the initial value $c_0 > 0$, with $d_{min} +c_0 >0$. If $c_t$ satisfies the condition $c_0 + d_{min} \leq c_t \leq c_0 + d_{max}$ for $t\ge 0$ then we say that $c_t$ is an admissible deterministic deformation.
\end{definition}

In order to characterize admissible deterministic deformations we use the classical invariance theorem (see \cite{walter}, \cite{pavel}) :

\begin{theorem}\label{invd}
	Let $a,b\in\mathbb{R}$ such that $b>a$ and $\frac{dX(t)}{dt} =  f(t,X(t))$ where $f \in \mathcal{C}^2(\mathbb{R}\times\mathbb{R},\mathbb{R})$. Then, the 
	set 
	$$
	K:=\{x\in\mathbb{R} :\ a\leq x\leq b\}
	$$ 
	is invariant for $X(t)$ if and only if
	\begin{eqnarray}
		f (t,a) &\geq & 0 ,\nonumber \\
		f (t,b) &\leq & 0 ,\nonumber
	\end{eqnarray}
	for all $t\geq 0$.
\end{theorem}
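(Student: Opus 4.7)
The plan is to establish the two implications separately. For necessity, I would use a right-derivative argument at the boundary: if $K$ is invariant, pick any $t_0 \geq 0$ and consider the solution $X$ of the Cauchy problem with $X(t_0) = a$. Invariance forces $X(t) \geq a$ for $t \geq t_0$, hence
\begin{equation*}
f(t_0,a) = \lim_{t\to t_0^+} \frac{X(t) - a}{t - t_0} \geq 0.
\end{equation*}
The symmetric choice $X(t_0) = b$ yields $f(t_0,b) \leq 0$. This direction uses only $C^0$ regularity together with local existence, which the $C^2$ hypothesis supplies.

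For sufficiency, let $X$ be a solution with $X(0) \in K$ defined on some interval $[0,T]$; I would prove $a \leq X(t) \leq b$ by handling each bound separately via a perturbation trick. Fix $\epsilon > 0$ and consider the auxiliary problem
\begin{equation*}
Y_\epsilon'(t) = f(t, Y_\epsilon(t)) - \epsilon, \qquad Y_\epsilon(0) = X(0).
\end{equation*}
At any point where $Y_\epsilon = b$, we get $Y_\epsilon' = f(t,b) - \epsilon \leq -\epsilon < 0$, so the graph of $Y_\epsilon$ cannot cross the horizontal line $y=b$ upwards; a standard first-exit argument therefore yields $Y_\epsilon(t) \leq b$ throughout $[0,T]$. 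Since $f \in \mathcal{C}^2$, it is locally Lipschitz, so continuous dependence on parameters gives $Y_\epsilon \to X$ uniformly on $[0,T]$ as $\epsilon \to 0$, and passing to the limit produces $X(t) \leq b$. The mirror-image perturbation $Z_\epsilon' = f(t,Z_\epsilon) + \epsilon$ controls the lower bound $X(t) \geq a$ in exactly the same way.

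The main obstacle is the tangent (degenerate) case, where $f$ may vanish on the boundary. A naive first-exit argument applied directly to $X$ produces $t^\ast = \inf\{t : X(t) > b\}$ and only yields $0 \leq X'(t^\ast) = f(t^\ast,b) \leq 0$, i.e. $f(t^\ast,b) = 0$; this equality by itself does not prevent $X$ from escaping $K$ immediately after $t^\ast$. The perturbation to $f \mp \epsilon$ converts the weak inward inequalities into strict ones, restores the effectiveness of the first-exit argument for the auxiliary problem, and then recovers the invariance for $X$ by the limit $\epsilon \to 0^+$. This is the standard Nagumo tangent-condition strategy specialized to a one-dimensional interval, and the same scheme will be adapted in the stochastic setting later in the paper.
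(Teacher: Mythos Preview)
The paper does not supply its own proof of this theorem; it is quoted as the classical invariance criterion and referenced to \cite{walter} and \cite{pavel}. Consequently there is nothing in the paper to compare your argument against, and the relevant question is simply whether your proof is sound on its own. It is. The necessity direction via the right difference quotient at a boundary initial datum is the standard one-line observation. For sufficiency, your $\epsilon$-perturbation of the vector field, which converts the weak inward condition $f(t,b)\le 0$ into the strict inequality $f(t,b)-\epsilon<0$, then recovers $X$ by continuous dependence in the parameter $\epsilon$, is exactly the classical device used in Walter's treatment of Nagumo's tangent condition; you have also correctly isolated why the naive first-exit argument applied directly to $X$ stalls at the degenerate equality $f(t^\ast,b)=0$. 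The only point worth making explicit when you write this up is that continuous dependence also guarantees that $Y_\epsilon$ and $Z_\epsilon$ exist on the full interval $[0,T]$ for $\epsilon$ small enough, so that the first-exit argument is not cut short by finite-time blow-up of the auxiliary solutions. With that remark added, your proof could serve verbatim as the proof the paper omits.
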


\begin{lemma}[Characterization of admissible deterministic deformations]
	Let $c_t$ satisfying $\frac{dc_t}{dt} =  f(t,c_t)$ then $c_t$ is an admissible deterministic variation if and only if
	\begin{align*}
		f(t,c_0 + d_{min}) & \geq 0, \\
		f(t,c_0 + d_{max}) & \leq 0 \ , \quad \forall t \geq 0.
	\end{align*}
\end{lemma}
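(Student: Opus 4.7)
The plan is to apply Theorem \ref{invd} directly with the choice $a = c_0 + d_{min}$ and $b = c_0 + d_{max}$. The whole point of the lemma is just to translate the abstract invariance criterion into the specific form dictated by the definition of admissibility, so the proof is essentially a matching of definitions rather than a new argument.

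First, I would observe that the definition of an admissible deterministic deformation says precisely that $c_t \in K$ for all $t \geq 0$, where
\begin{equation*}
K = \{x \in \mathbb{R} \,:\, c_0 + d_{min} \leq x \leq c_0 + d_{max} \}.
\end{equation*}
The hypothesis $d_{min} + c_0 > 0$ together with $d_{max} > 0$ ensures $K$ is a non-degenerate closed interval with $a < b$, so Theorem \ref{invd} applies to this set. Since the initial condition satisfies $c_0 \in K$, saying that $c_t$ is admissible for all $t \geq 0$ is exactly the statement that $K$ is invariant under the flow of $\frac{dc_t}{dt} = f(t,c_t)$.

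Second, I would invoke Theorem \ref{invd} to conclude that $K$ is invariant if and only if $f(t,a) \geq 0$ and $f(t,b) \leq 0$ for all $t \geq 0$. Substituting $a = c_0 + d_{min}$ and $b = c_0 + d_{max}$ yields exactly the two inequalities in the statement of the lemma. The regularity assumption $f \in \mathcal{C}^2(\mathbb{R}\times\mathbb{R},\mathbb{R})$ made at the beginning of the subsection is what allows us to apply Theorem \ref{invd}, which assumes this level of smoothness.

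There is no real obstacle here: the lemma is a direct corollary of Theorem \ref{invd}. The only point requiring a line of commentary is that the initial condition $c_0$ lies inside $K$ (which uses $d_{min} < 0 < d_{max}$); without this, invariance of $K$ would say nothing about the trajectory starting at $c_0$. Once that is noted, the equivalence follows immediately from the theorem.
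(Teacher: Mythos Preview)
Your proposal is correct and follows exactly the approach intended in the paper: the lemma is stated immediately after Theorem~\ref{invd} as a direct specialization with $a=c_0+d_{\min}$ and $b=c_0+d_{\max}$, and no separate proof is given there. Your additional remarks that $d_{\min}<0<d_{\max}$ guarantees both $a<b$ and $c_0\in K$ are the only points worth making explicit, and you have done so.
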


\subsubsection{A deterministic toy-model}

In order to perform numerical simulations, we define an ad-hoc admissible deformations given by
\begin{equation}
	f(x) = \alpha\cos(\gamma t)(x-(c_0 + d_{min}))((c_0+d_{max})-x) \ , \quad e,\alpha \in \mathbb{R}^+ .
\end{equation}
where $\alpha$ and $\gamma$ are real numbers. As a consequence, the major axis $c_t$ satisfies the differential equation
\begin{equation}
	\frac{dc_t}{dt} = \alpha\cos(\gamma t)(c_t-(c_0 + d_{min}))((c_0+d_{max})-c_t) .
\end{equation}

\begin{remark}
	It is reasonable to take a periodic deformation for the deterministic part as we observe such kind of variations for the Earth's oblateness (see \cite{bizouard}, \cite{cheng_variation_obl})
\end{remark}

\subsection{Motion of an ellipsoid with stochastic flattening}

\subsubsection{Reminder about stochastic differential equations}

We remind basic properties and definition of stochastic differential equations in the sense of It\^o. We refer to the book \cite{oksendal} for more details.\\

A {\it stochastic differential equation} is formally written (see \cite[Chapter V]{oksendal}) in differential form as  
\begin{equation}
	\label{stocequa}
	dX_t = \mu (t,X_t)dt+\sigma(t,X_t)dB_t ,
\end{equation}
which corresponds to the stochastic integral equation
\begin{equation}
	\label{stocintegral}
	X_t=X_0+\int_0^t \mu (s,X_s)\,ds+\int_0^t \sigma (s,X_s)\,dB_s ,
\end{equation}
where the second integral is an It\^o integral (see \cite[Chapter III]{oksendal}) and $B_t$ is the classical Brownian motion (see \cite[Chapter II, p.7-8]{oksendal}).\\

An important tool to study solutions to stochastic differential equations is the {\it multi-dimensional It\^o formula} (see \cite{oksendal},Chap.III,Theorem 4.6) which is stated as follows : \\

We denote a vector of It\^o processes by $\mathbf{X}_t^\mathsf{T} = (X_{t,1}, X_{t,2}, \ldots, X_{t,n})$ and we put $\mathbf{B}_t^\mathsf{T} = (B_{t,1}, B_{t,2}, \ldots, B_{t,n})$to be a $n$-dimensional Brownian motion (see \cite{karatzas},Definition 5.1,p.72),  $d\mathbf{B}_t^\mathsf{T} = (dB_{t,1}, dB_{t,2}, \ldots, dB_{t,n})$. We consider the multi-dimensional stochastic differential equation defined by (\ref{stocequa}). Let $f$ be a $\mathcal{C}^2(\mathbb{R}_+ \times \mathbb{R},\mathbb{R})$-function and $X_t$ a solution of the stochastic differential equation (\ref{stocequa}). We have 
\begin{eqnarray}
	df(t,\mathbf{X}_t) = \frac{\partial f}{\partial t} dt + (\nabla_\mathbf{X}^{\mathsf T} f) d\mathbf{X}_t + \frac{1}{2} (d\mathbf{X}_t^\mathsf{T}) (\nabla_\mathbf{X}^2 f) d\mathbf{X}_t,
\end{eqnarray}
where $\nabla_\mathbf{X} f = \partial f/\partial \mathbf{X}$ is the gradient of $f$ w.r.t. $X$, $\nabla_\mathbf{X}^2 f = \nabla_\mathbf{X}\nabla_\mathbf{X}^\mathsf{T} f$ is the Hessian matrix of $f$ w.r.t. $\mathbf{X}$, $\delta$ is the Kronecker symbol and the following rules of computation are used : $dt dt = 0$, $dt dB_{t,i}  = 0$, $dB_{t,i} dB_{t,j} = \delta_{ij} dt$.

\subsubsection{Stochastic variation of the flattening}

Let $c_t$ be a stochastic process expressed as
\begin{equation}
	dc_t = f(t,c_t)dt + g(t,c_t)dB_t \label{dc_t}
\end{equation}
where $f,g \in \mathcal{C}^2(\mathbb{R}\times\mathbb{R},\mathbb{R})$. \\

\noindent{\bf Consequence of assumption (H1)} : Applying the It\^o formula on the volume formula (\ref{V})
\begin{equation}
	a_t^2=\frac{3V_\mathcal{E}}{4\pi} \frac{1}{c_t},
\end{equation}
we obtain
\begin{equation}
	d(a_t^2) = \frac{3V_\mathcal{E}}{4\pi} \left( -\frac{1}{c_t^2}dc_t + \frac{1}{c_t^3}(dc_t)^2 \right).
\end{equation}
Thus using the expression of $dc_t$ we obtain the following lemma :
\begin{lemma}
	Under assumption (H1) the variation of $a$ is given by
	\begin{equation}
		d(a_t^2) = \frac{3V_\mathcal{E}}{4\pi} \left[\left( -\frac{f(t,c_t)}{c_t^2} + \frac{g(t,c_t)^2}{c_t^3}  \right)dt - \frac{g(t,c_t)}{c_t^2} dB_t \right]. \label{da2}
	\end{equation}
\end{lemma}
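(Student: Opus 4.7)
The plan is to apply the one-dimensional Itô formula to the smooth function $\varphi(c) = \tfrac{3V_\mathcal{E}}{4\pi c}$ evaluated along the Itô process $c_t$ solving \eqref{dc_t}. The volume $V_\mathcal{E}$ is held fixed by assumption, so $\varphi$ is an honest deterministic function of $c$ alone, and $\varphi \in \mathcal{C}^2((0,\infty),\mathbb{R})$, which is enough regularity as long as $c_t$ stays positive along its trajectory (this will be enforced later by the admissibility construction of the next subsection).

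First I would compute the two derivatives $\varphi'(c) = -\tfrac{3V_\mathcal{E}}{4\pi c^2}$ and $\varphi''(c) = \tfrac{3V_\mathcal{E}}{2\pi c^3}$, so that Itô's formula (in its one-dimensional time-independent specialization of the version recalled just above) yields
$$d(a_t^2) = \varphi'(c_t)\,dc_t + \tfrac{1}{2}\varphi''(c_t)(dc_t)^2 = \tfrac{3V_\mathcal{E}}{4\pi}\left(-\tfrac{1}{c_t^2}\,dc_t + \tfrac{1}{c_t^3}(dc_t)^2\right),$$
which is exactly the intermediate identity already displayed in the text just before the lemma. Next I would substitute $dc_t = f(t,c_t)\,dt + g(t,c_t)\,dB_t$ and use the Itô table $dt\,dt = 0$, $dt\,dB_t = 0$, $dB_t\,dB_t = dt$ to get $(dc_t)^2 = g(t,c_t)^2\,dt$. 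Collecting the $dt$ and $dB_t$ contributions and factoring out $\tfrac{3V_\mathcal{E}}{4\pi}$ then produces the announced expression \eqref{da2}.

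I do not foresee any substantive obstacle: the argument is a direct application of Itô's formula and is the strict stochastic analogue of the chain-rule step used in the preceding deterministic lemma. The only new feature is the quadratic-variation correction $\tfrac{1}{2}\varphi''(c_t)(dc_t)^2$, which is precisely what produces the additional drift term $\tfrac{g(t,c_t)^2}{c_t^3}\,dt$ inside the bracket of \eqref{da2}; without it one would recover exactly the deterministic formula of the previous subsection. The only minor point of care is checking that $c_t$ remains in a region where $\varphi$ is smooth, which will be guaranteed a posteriori by the admissibility construction of Section \ref{adm_stoc_def}.
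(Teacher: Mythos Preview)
Your proposal is correct and follows exactly the same approach as the paper: apply It\^o's formula to $\varphi(c)=\tfrac{3V_\mathcal{E}}{4\pi c}$ to obtain the intermediate identity $d(a_t^2) = \tfrac{3V_\mathcal{E}}{4\pi}\bigl(-\tfrac{1}{c_t^2}\,dc_t + \tfrac{1}{c_t^3}(dc_t)^2\bigr)$, then substitute the SDE for $c_t$ and use $(dc_t)^2=g(t,c_t)^2\,dt$. The paper in fact displays that same intermediate identity just before the lemma and states the result without further detail.
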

We can now determine the variation of the inertia matrix coefficients $I_1$ and $I_3$. \\

\noindent{\bf Consequence of assumption (H2)} : Applying the It\^o formula on the expression of $I_1$ and $I_3$ leads to

\begin{lemma}
	Under assumption (H2) the variation of $I_1$ and $I_3$ are given by
	\begin{equation}
		dI_3 = \frac{3 M_\mathcal{E} V_\mathcal{E}}{10\pi}\left[\left( -\frac{f(t,c_t)}{c_t^2} + \frac{g(t,c_t)^2}{c_t^3}  \right)dt - \frac{g(t,c_t)}{c_t^2} dB_t \right]
	\end{equation}
	and
	\begin{align*}
		dI_1 = \frac{M_\mathcal{E}}{5}\bigg[& \left(-\frac{3V_\mathcal{E}}{4\pi}\frac{f(t,c_t)}{c_t^2} + g^2(t,c_t)\left(1+\frac{3V_\mathcal{E}}{4\pi c_t^3}\right) + 2 c_t f(t,c_t)\right)dt \nonumber \\
		&+ g(t,c_t)\left(2 c_t-\frac{3V_\mathcal{E}}{4\pi c_t^2} \right)dB_t \bigg] .
	\end{align*}
\end{lemma}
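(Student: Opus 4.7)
The plan is to apply the It\^o formula componentwise. Since $I_3=\tfrac{2}{5}M_\mathcal{E}\,a_t^2$ is a linear (constant-coefficient) function of $a_t^2$, the formula for $dI_3$ follows immediately from the preceding lemma on $d(a_t^2)$: multiplying the displayed expression for $d(a_t^2)$ by $\tfrac{2}{5}M_\mathcal{E}$ and using $\tfrac{2}{5}\cdot\tfrac{3}{4}=\tfrac{3}{10}$ yields exactly the stated expression for $dI_3$.

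For $I_1=\tfrac{1}{5}M_\mathcal{E}(a_t^2+c_t^2)$, the plan is to write $dI_1=\tfrac{1}{5}M_\mathcal{E}\bigl(d(a_t^2)+d(c_t^2)\bigr)$ and compute $d(c_t^2)$ from scratch via It\^o: applying the multi-dimensional It\^o formula to the function $\varphi(x)=x^2$ evaluated at the It\^o process $c_t$ gives
\begin{equation*}
  d(c_t^2)=2c_t\,dc_t+(dc_t)^2=\bigl(2c_tf(t,c_t)+g(t,c_t)^2\bigr)\,dt+2c_tg(t,c_t)\,dB_t,
\end{equation*}
where I used the It\^o rules $dt\,dt=0$, $dt\,dB_t=0$, $dB_t\,dB_t=dt$ reminded above.

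Adding this to the expression for $d(a_t^2)$ supplied by the previous lemma and factoring by $\tfrac{M_\mathcal{E}}{5}$, the drift becomes
$$-\frac{3V_\mathcal{E}}{4\pi}\frac{f(t,c_t)}{c_t^2}+\frac{3V_\mathcal{E}}{4\pi}\frac{g(t,c_t)^2}{c_t^3}+2c_tf(t,c_t)+g(t,c_t)^2,$$
which, after collecting the two $g^2$ contributions, is exactly the coefficient $-\tfrac{3V_\mathcal{E}}{4\pi}\tfrac{f}{c_t^2}+g^2\!\bigl(1+\tfrac{3V_\mathcal{E}}{4\pi c_t^3}\bigr)+2c_tf$ appearing in the statement. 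Similarly, the diffusion coefficient combines as $2c_tg(t,c_t)-\tfrac{3V_\mathcal{E}}{4\pi c_t^2}g(t,c_t)=g(t,c_t)\bigl(2c_t-\tfrac{3V_\mathcal{E}}{4\pi c_t^2}\bigr)$, giving the desired formula.

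There is no genuine obstacle here; the only point that requires care is the quadratic variation contribution $(dc_t)^2=g(t,c_t)^2\,dt$ to $d(c_t^2)$, which is precisely what produces the extra $g^2$ term in the drift that distinguishes the stochastic case from the deterministic lemma proved earlier. Once this is tracked correctly, the result reduces to elementary algebraic rearrangement.
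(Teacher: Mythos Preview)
Your proof is correct and follows essentially the same approach as the paper, which simply states that the result is obtained by applying the It\^o formula to the expressions for $I_1$ and $I_3$. Your version is in fact more explicit: you leverage the already-computed $d(a_t^2)$ from the preceding lemma and add the It\^o differential $d(c_t^2)$, carefully tracking the quadratic-variation term $(dc_t)^2=g^2\,dt$ that generates the extra drift contribution.
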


\subsubsection{Stochastic equations of motion}

In order to formulate the equations of motion of $\mathcal{E}$ with a stochastic flattening, we first rewrite the equations of motion (\ref{eqmotion}) in differential form, which is the natural form for the stochastic process, in order to have coherent form of writing :
\begin{equation}
	dL_i = l_i(\mathbf{I},\mathbf{\Omega})dt ,
\end{equation}
where $l_i(\mathbf{I},\mathbf{\Omega})$ are the same as previous. Taking into account our stochastic variation of the flattening we get the full set of the stochastic equations of motion for $\mathcal{E}$ as
\begin{align}
	dL_i & = l_i({\bf I},\boldsymbol \Omega)dt \label{Li},\\
	dI_i & =  h_i(c_t)dt + m_i(c_t)dB_t \nonumber \label{Ii}, 
\end{align}
for $i=1,2,3$ where

\begin{equation}
	\begin{array}{lll}
		h_1(c_t)&=&\frac{M_\mathcal{E}}{5} \left(-\frac{3V_\mathcal{E}}{4\pi}\frac{f(t,c_t)}{c_t^2} + g^2(t,c_t)\left(1+\frac{3V_\mathcal{E}}{4\pi c_t^3}\right) + 2 c_t f(t,c_t)\right), \\
		h_3(c_t)&=&\frac{3 M_\mathcal{E} V_\mathcal{E}}{10\pi}\left( -\frac{f(t,c_t)}{c_t^2} + \frac{g(t,c_t)^2}{c_t^3} \right), \\
		h_2(c_t)&=&h_3(c_t),
	\end{array}
\end{equation}
\begin{equation}
	\begin{array}{lll}
		m_1(c_t)&=&\frac{M_\mathcal{E}}{5} g(t,c_t)\left(2 c_t-\frac{3V_\mathcal{E}}{4\pi c_t^2} \right), \\
		m_3(c_t)&=&-\frac{3 M_\mathcal{E} V_\mathcal{E}}{10\pi}\frac{g(t,c_t)}{c_t^2}, \\
		m_2(c_t)&=&m_3(c_t).
	\end{array}
\end{equation}

A stochastic version of the Euler-Liouville equation induced by the stochastic flattening is then obtained as follows : As we consider only variation of the flattening, we have a rotational symmetry during the deformation. Hence, we have $L_i = I_i \Omega_i$  or equivalently $\Omega_i = \frac{L_i}{I_i}$ for $i=1,2,3$. Thus, using the It\^o formula for each component of $\mathbf{\Omega}$, we obtain :

\begin{definition}
We call Stochastic Euler-Liouville equations for an ellipsoid with a stochastic flattening the following equations
\begin{equation}
	\begin{array}{lll}
		d\Omega_i &=& \left(\frac{l_i({\bf I},\boldsymbol \Omega)}{I_i} - \frac{\Omega_i}{I_i}h_i(c_t)+\frac{\Omega_i}{I_i^2}m_i^2(c_t) \right)dt - \frac{\Omega_i}{I_i}m_i(c_t)dB_t, \\
		dI_i & = & h_i(c_t)dt + m_i(c_t)dB_t, \\
		dc_t &= & f(t,c_t)dt + g(t,c_t)dB_t
	\end{array}\label{euler_liouville_sto}
\end{equation}
for $i=1,2,3$.
\end{definition}

As we can see, there exist a drift or secular variation in the rotation vector, represented by the term $\di \frac{\Omega_i}{I_i^2}m_i^2(c_t)$ which is induced by the stochastic nature of the variations considered. If one would like to interpret the rotation vector in terms of the so-called, \emph{Euler angles}, one would observe a secular variation in the angles.

\begin{remark}
Let us remark that Stochastic Euler-Liouville equations are also valid if one would like to consider directly variations on the inertia matrix coefficient $I_1$ and $I_3$. In that case we would have these equations written as
\begin{equation}
\begin{array}{lll}
d\Omega_i &=& \left(\frac{l_i({\bf I},\boldsymbol \Omega)}{I_i} - \frac{\Omega_i}{I_i}h_i(t,I_i)+\frac{\Omega_i}{I_i^2}m_i^2(t,I_i) \right)dt - \frac{\Omega_i}{I_i}m_i(t,I_i)dB_t, \\
dI_i & = & h_i(t,I_i)dt + m_i(t,I_i)dB_t,
\end{array}
\end{equation}
for $i=1,2,3$. Such a case is interesting when one want to model variation of the shape of a body in term of the inertia matrix coefficients, the dynamical flattening $H$ or the zonal harmonic $J_2$ as in \cite{yoder}. In that precise case, instead of considering boundedness variation of the flattening $c_t$, one can formulate the assumption (H3) in term of the \emph{invariance of the trace of inertia matrix} using the result in \cite{rochester}.
\end{remark}

\subsubsection{Variation of the dynamical flattening H and the zonal harmonic $J_2$}
\label{variation_seculaire}
\begin{proposition}
\label{H_drift}
Under assumptions (H1) and (H2),  the variation of the dynamical flattening $H$ is given by
\begin{equation}
dH=-\frac{2\pi}{V_\mathcal{E}}c_t^2 dc_t-\frac{2\pi}{V_\mathcal{E}}c_t g(t,c_t)^2dt
\end{equation}
or equivalently by
\begin{equation}
H=H_0-\frac{2\pi}{V_\mathcal{E}} \int_{c_0}^{c_t} c_s^2 dc_s-\frac{2\pi}{V_\mathcal{E}} \int_{0}^{t}c_s g(s,c_s)^2ds.
\end{equation}
\end{proposition}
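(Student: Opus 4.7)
The plan is to reduce $H$ to a simple scalar function of $c_t$ alone and then apply the one-dimensional It\^o formula. The two assumptions (H1) and (H2) make this reduction possible: conservation of mass keeps the $M_{\mathcal{E}}$ prefactors in the inertia coefficients constant, and conservation of volume ties $a_t$ to $c_t$ through $a_t^2 = \tfrac{3V_{\mathcal{E}}}{4\pi c_t}$. So the first step is purely algebraic: writing $I_3-I_1 = \tfrac{1}{5}M_{\mathcal{E}}(a_t^2-c_t^2)$ and $I_3 = \tfrac{2}{5}M_{\mathcal{E}} a_t^2$, I get
\begin{equation*}
H \;=\; \frac{I_3-I_1}{I_3} \;=\; \frac{1}{2} - \frac{c_t^2}{2 a_t^2},
\end{equation*}
and substituting the volume relation turns this into
\begin{equation*}
H \;=\; \frac{1}{2} - \frac{2\pi}{3 V_{\mathcal{E}}}\, c_t^3 \;=\; \varphi(c_t),
\end{equation*}
a deterministic $C^2$ function of the single scalar process $c_t$.

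The second step is to apply It\^o's formula to $\varphi$ along $c_t$, using (\ref{dc_t}). Since $\varphi'(c) = -\tfrac{2\pi}{V_{\mathcal{E}}} c^2$ and $\varphi''(c) = -\tfrac{4\pi}{V_{\mathcal{E}}} c$, and since the quadratic variation of $c_t$ is $(dc_t)^2 = g(t,c_t)^2\,dt$, I obtain
\begin{equation*}
dH \;=\; \varphi'(c_t)\, dc_t \;+\; \tfrac{1}{2}\varphi''(c_t)\,(dc_t)^2 \;=\; -\frac{2\pi}{V_{\mathcal{E}}}\, c_t^2\, dc_t \;-\; \frac{2\pi}{V_{\mathcal{E}}}\, c_t\, g(t,c_t)^2\, dt,
\end{equation*}
which is exactly the stated differential form.

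The integral form then follows by integration between $0$ and $t$, interpreting $\int_{c_0}^{c_t} c_s^2\, dc_s$ as an It\^o integral along the trajectory $(c_s)$; this is really just shorthand for $\int_0^t c_s^2\, dc_s$ with $c_0$ and $c_t$ as the endpoint values of the process, and the final $dt$-integral is a standard Lebesgue integral of a continuous adapted process. There is no real obstacle beyond being careful with the It\^o correction term: the only nontrivial point is remembering that because of the quadratic term the apparent ``change of variable'' $H - H_0 = \varphi(c_t)-\varphi(c_0)$ is \emph{not} just $\int_0^t \varphi'(c_s)\,dc_s$, and it is precisely the $\varphi''$ contribution that produces the extra drift $-\tfrac{2\pi}{V_{\mathcal{E}}} c_t g(t,c_t)^2\,dt$ responsible, as announced in the introduction, for the secular component in the dynamical flattening.
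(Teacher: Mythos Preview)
Your proof is correct, and it takes a more direct route than the paper's. The paper applies the multi-dimensional It\^o formula to $H=(I_3-I_1)/I_3$ viewed as a function of the two It\^o processes $I_1$ and $I_3$, obtaining
\[
dH=\frac{I_1\,dI_3-I_3\,dI_1}{I_3^2}+\frac{I_3\,dI_1\,dI_3-I_1(dI_3)^2}{I_3^3},
\]
and then substitutes the explicit drifts $h_i$ and diffusions $m_i$ of $dI_1,dI_3$ before simplifying everything down to the stated expression in $c_t$. You instead short-circuit this by first carrying out the algebraic reduction $H=\tfrac12-\tfrac{2\pi}{3V_{\mathcal E}}c_t^3$ under (H1)--(H2) and only then invoking the one-dimensional It\^o formula. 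Your approach is cleaner and avoids manipulating the cross-variation $dI_1\,dI_3$; the paper's approach has the mild advantage that it stays closer to the physical quantities $I_1,I_3$ and would adapt more readily to situations where $H$ cannot be written as a function of a single scalar (e.g.\ the remark following the Stochastic Euler--Liouville equations, where $I_1$ and $I_3$ are taken as independent inputs).
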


\begin{proof}
\begin{equation}
\label{dH_Ii}
dH=\frac{I_1dI_3-I_3dI_1}{I3^2}+\frac{I_3 dI_1dI_3-I_1(dI_3)^2}{I_3^3}.
\end{equation}
Using the expression of the variation of $I_1$ and $I_3$, we obtain
\begin{equation}
dH=\left[\frac{I_1h_3-h_1I_3}{I_3^2}+\frac{m_3(I_3m_1-I_1m_3)}{I_3^3}\right]dt+\frac{I_1m_3-m_1I_3}{I_3^2}dB_t.
\end{equation}
From expressions of $I_1$, $I_3$,$h_1$,$h_3$,$m_1$ et $m_3$, we obtain 
\begin{equation}
dH=-\frac{2\pi}{V_\mathcal{E}}c_t^2 dc_t-\frac{2\pi}{V_\mathcal{E}}c_t (dc_t)^2.
\end{equation}
As $(dc_t)^2=g(t,c_t)^2dt$, we obtain the result.
\end{proof}

From the expression of the zonal harmonic $J_2$ and the variation of the dynamical flattening, we obtain its variation:
\begin{lemma}
\label{j2_drift}
Under assumptions (H1) and (H2),  the variation of the zonal harmonic $J_2$ is given by
\begin{equation}
dJ_2=\frac{4\pi}{5V_\mathcal{E}}c_t^2 dc_t+\frac{4\pi}{5V_\mathcal{E}}c_t g(t,c_t)^2dt
\end{equation}
or equivalently by
\begin{equation}
J_2=J_{2,0}+\frac{4\pi}{5V_\mathcal{E}} \int_{c_0}^{c_t} c_s^2 dc_s+\frac{4\pi}{5V_\mathcal{E}} \int_{0}^{t}c_s g(s,c_s)^2ds.
\end{equation}
\end{lemma}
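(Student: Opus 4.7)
The plan is to exploit a simple algebraic relation between $J_2$ and the dynamical flattening $H$ that holds for a homogeneous ellipsoid of revolution, so that the result reduces immediately to Proposition \ref{H_drift} without performing another Itô computation from scratch.

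First I would observe that, by the definitions recalled in Section 2, the ratio $I_3/(M_\mathcal{E} a_t^2)$ equals $\tfrac{2}{5}$ identically in time, since $I_3 = \tfrac{2}{5} M_\mathcal{E} a_t^2$ even when $a_t$ varies. Writing
\begin{equation*}
J_2 \;=\; \frac{I_1-I_3}{M_\mathcal{E} a_t^2} \;=\; -\,\frac{I_3-I_1}{I_3}\cdot\frac{I_3}{M_\mathcal{E} a_t^2} \;=\; -\,\frac{2}{5}\,H,
\end{equation*}
gives a purely algebraic, pathwise linear relation between $J_2$ and $H$. In particular this relation contains no stochastic integral, and $-\tfrac{2}{5}$ is deterministic and time-independent.

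Second, since $J_2 = -\tfrac{2}{5}H$ is a linear function of $H$ with constant coefficient, Itô's formula reduces to $dJ_2 = -\tfrac{2}{5}\, dH$ (there is no quadratic-variation correction). I would then plug in the expression for $dH$ furnished by Proposition \ref{H_drift},
\begin{equation*}
dH \;=\; -\,\frac{2\pi}{V_\mathcal{E}}\, c_t^{2}\, dc_t \;-\;\frac{2\pi}{V_\mathcal{E}}\, c_t\, g(t,c_t)^{2}\, dt ,
\end{equation*}
and simplify the constants to obtain exactly
\begin{equation*}
dJ_2 \;=\; \frac{4\pi}{5V_\mathcal{E}}\, c_t^{2}\, dc_t \;+\;\frac{4\pi}{5V_\mathcal{E}}\, c_t\, g(t,c_t)^{2}\, dt .
\end{equation*}
Integrating both sides from $0$ to $t$ and using the initial value $J_{2,0}$ yields the equivalent integral form stated in the lemma.

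The only genuine obstacle is bookkeeping of signs and prefactors: the definitions of $H$ and $J_2$ use opposite orderings in the numerator ($I_3-I_1$ versus $I_1-I_3$), so one must be careful to record $J_2=-\tfrac{2}{5}H$ rather than $+\tfrac{2}{5}H$; getting this right flips the signs coming out of Proposition \ref{H_drift} and produces the positive coefficient $+\tfrac{4\pi}{5V_\mathcal{E}}$ in the final formula. No stochastic-analytic subtleties arise because the proportionality constant is deterministic and constant in time.
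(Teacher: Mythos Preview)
Your proof is correct and follows exactly the approach the paper itself takes: the paper does not give a separate proof of the lemma but simply states that it follows ``from the expression of the zonal harmonic $J_2$ and the variation of the dynamical flattening,'' which is precisely your observation that $J_2=-\tfrac{2}{5}H$ (since $I_3=\tfrac{2}{5}M_\mathcal{E}a_t^2$) combined with Proposition~\ref{H_drift}. Your attention to the sign in the numerators of $H$ and $J_2$ is on point and yields the correct prefactor $\tfrac{4\pi}{5V_\mathcal{E}}$.
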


\begin{remark}
The term $\di \int_{0}^{t}c_s g(s,c_s)^2ds$ in the variation of the dynamical flattening $H$ or the zonal harmonic $J_2$, is exactly the consequence of the stochastic nature of the variation of the flattening. It induces a drift which could be found when studying the long time behavior of quantity depending on the term $H$ or $J_2$, such as the \emph{length of the day}.
\end{remark}

From the previous proposition and remark, one can see a non negligible consequence of such a stochastic model. Indeed, considering only deterministic variations, there is no chance to obtain the drift induced by the It\^o formula and by consequence, it is impossible to understand why there exist long time drift for example, in the \emph{length of the day}.

\begin{remark}
From a practical point of view, one has to study this extra term to model the stochastic process governing the variation of the flattening.
\end{remark}

\subsubsection{Admissible stochastic deformations}
\label{adm_stoc_def}
A stochastic process has in general unbounded variations. We have to take precautions when considering stochastic fluctuations of the flattening. Indeed, the assumption (H3) mainly concerns the purely stochastic part and has to be interpreted as a way to have not a noise which ``explodes''. This is the main difference with the deterministic case. Even if it has been showed recently (see \cite{cheng_deceleration}) that there exists a secular variation in the zonal harmonic $J_2$, it is not incompatible with the coupling of a stochastic variation in the flattening which has bounded variations. Indeed, bounded variations of the flattening also induce a drift in the zonal harmonic $J_2$ (see Lemma \ref{j2_drift}). \\

The main constraint on the deformation in the stochastic case comes from the boundedness assumption.  :

\begin{definition}
	If $c_t$ satisfies the condition $\mathbb{P}\left(c_0 + d_{min} \leq c_t \leq c_0 + d_{max}\right)=1$ for $t\ge 0$ then, we say that $c_t$ is an admissible stochastic deformation where $\mathbb{P}$ is the underlying probability measure.
\end{definition}

In order to characterize admissible stochastic deformations, we use the stochastic invariance theorem (see (\cite{milian}) :

\begin{theorem}\label{mainB}
	Let $a,b\in\mathbb{R}$ such that $b>a$ and $dX(t) =  f(t,X(t) ) dt + g(t,X(t) ) dB_t$ a stochastic process. Then, the 
	set 
	$$
	K:=\{x\in\mathbb{R} :\ a\leq x\leq b\}
	$$ 
	is invariant for the stochastic process $X(t)$ if and only if
	\begin{eqnarray}
		f (t,a) &\geq & 0, \nonumber \\
		f (t,b) &\leq & 0,  \nonumber \\
		g(t,x) &= &0 \quad \text{for } x\in\{a,b\},\nonumber
	\end{eqnarray}
	for all $t\geq 0$.
\end{theorem}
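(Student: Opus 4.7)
The plan is to prove both implications separately. For necessity (invariance implies the three boundary conditions), I will start the process at the endpoints of $K$ and read off constraints on $f$ and $g$ from the infinitesimal behavior. For sufficiency I will apply an Itô estimate to a squared-distance functional to $K$ and close the loop with Gronwall's inequality.

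For necessity, suppose $X(t) \in K$ almost surely for all $t \geq 0$. Fix $t_0 \geq 0$ and start the process at $X(t_0) = b$. On the infinitesimal scale,
\begin{equation}
X(t_0 + h) - b \approx f(t_0, b)\, h + g(t_0, b)\,(B_{t_0 + h} - B_{t_0}).
\end{equation}
The Brownian increment is symmetric with variance $h$, so if $g(t_0, b) \neq 0$ then $\mathbb{P}(X(t_0 + h) > b) > 0$ for every small $h > 0$, contradicting invariance. Hence $g(t, b) = 0$. Once $g(t,b) \equiv 0$, the dynamics at the upper boundary is locally deterministic, and the classical Theorem \ref{invd} forces $f(t, b) \leq 0$. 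The symmetric argument at $x = a$ gives $g(t, a) = 0$ and $f(t, a) \geq 0$.

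For sufficiency, introduce the squared distance to $K$,
\begin{equation}
\phi(x) = (a - x)_+^2 + (x - b)_+^2,
\end{equation}
which vanishes exactly on $K$. The key algebraic input is that, because $g$ is $\mathcal{C}^2$ and vanishes at the endpoints, a Taylor bound gives $g(t,x)^2 \leq C\, \phi(x)$ in a neighbourhood of $\{a,b\}$; meanwhile the sign conditions on $f$ prevent the drift from pointing outward of $K$. Apply the Itô formula (in a smoothed or Itô-Tanaka form, since $\phi$ is only $\mathcal{C}^{1,1}$) to $\phi(X(t))$, introduce localizing exit times $\tau_n$ to control unboundedness, and take expectations to kill the martingale term. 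The result is an estimate of the form
\begin{equation}
\mathbb{E}\bigl[\phi(X(t \wedge \tau_n))\bigr] \leq C \int_0^t \mathbb{E}\bigl[\phi(X(s \wedge \tau_n))\bigr]\, ds,
\end{equation}
to which Gronwall's lemma applies, giving $\mathbb{E}[\phi(X(t \wedge \tau_n))] = 0$ and hence $X(t) \in K$ almost surely after sending $n \to \infty$.

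The main obstacle is the sufficiency direction, and specifically two technical issues in the Itô step. First, $\phi$ is not $\mathcal{C}^2$, so one must either mollify and pass to the limit or use the Itô-Tanaka formula; the latter produces local-time contributions at $a$ and $b$ that one must check vanish, which is precisely where the hypothesis $g(t,a) = g(t,b) = 0$ is used. Second, the Gronwall step requires the quadratic variation contribution $g(t, X)^2 \mathbf{1}_{\{X \notin K\}}$ to be controlled by $\phi(X)$ itself; this is exactly what the vanishing of $g$ at the boundary combined with smoothness delivers, since locally $g(t,x)^2 = O((x-a)^2)$ near $a$ and similarly near $b$. Once these two regularity points are handled, the remaining computation is routine.
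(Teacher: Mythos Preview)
The paper does not actually prove Theorem~\ref{mainB}: it is quoted directly from Milian's stochastic viability paper \cite{milian}, so there is no in-house argument to compare your proposal against. That said, your outline follows the standard route for such invariance results and is essentially in the spirit of Milian's original proof (distance-to-$K$ functional, It\^o expansion, Gronwall), so the overall strategy is sound.

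Two small points on the sufficiency part. First, your worry about local-time terms is slightly misplaced: since $\phi(x)=(a-x)_+^2+(x-b)_+^2$ is $\mathcal{C}^{1,1}$, its second distributional derivative is an $L^\infty$ function (equal to $2$ off $K$ and $0$ on $\mathring K$), not a measure with atoms at $a$ and $b$. The generalized It\^o formula for $\mathcal{C}^{1,1}$ (or convex) functions therefore applies directly, with no boundary local-time contribution to kill; the hypothesis $g(t,a)=g(t,b)=0$ is used only in the Gronwall step, exactly as you say, to get $g(t,x)^2\mathbf{1}_{\{x\notin K\}}\le C\phi(x)$ near the boundary. Second, your necessity argument is heuristic as written; to make the ``$\approx$'' rigorous you need a short argument (e.g.\ a time-change or comparison with Brownian motion near the boundary, or the support theorem) showing that a nonzero diffusion coefficient at $b$ forces $\mathbb{P}(X(t_0+h)>b)>0$ for small $h$. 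None of this is a real obstacle, but it is where the work lies if you want a self-contained proof rather than a citation.
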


As a consequence, we have :

\begin{lemma}[Characterization of admissible stochastic deformations]
	Let $c_t$ satisfying $dc_t = f(t,c_t)dt + g(t,c_t)dB_t$ then, $c_t$ is an admissible deterministic variation if and only if
	\begin{align*}
		f(t,c_0 + d_{min}) & \geq 0, \\
		f(t,c_0 + d_{max}) & \leq 0 \ , \quad \forall t \geq 0, \\
		g(t,c_0 +d_{min}) & = g(t,c_0 + d_{max}) = 0 \ , \quad \forall t \geq 0 .
	\end{align*}
\end{lemma}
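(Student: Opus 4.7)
The plan is to recognize that this lemma is an immediate corollary of Theorem \ref{mainB}, the stochastic invariance theorem, applied to the specific interval determined by the admissibility bounds. There is essentially no new content to prove; the work has been packaged into Theorem \ref{mainB} and the calibration is routine.

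First I would unpack the definition of an admissible stochastic deformation: $c_t$ is admissible precisely when the set
\[
K := \{x \in \mathbb{R} \,:\, c_0 + d_{\min} \leq x \leq c_0 + d_{\max}\}
\]
satisfies $\mathbb{P}(c_t \in K) = 1$ for every $t \geq 0$, i.e. $K$ is invariant for the stochastic process defined by $dc_t = f(t,c_t)\,dt + g(t,c_t)\,dB_t$. Note that $K$ is well-defined and nondegenerate because the standing assumptions $d_{\min} < 0 < d_{\max}$ and $c_0 + d_{\min} > 0$ guarantee $a := c_0 + d_{\min} < b := c_0 + d_{\max}$.

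Next I would apply Theorem \ref{mainB} with this choice of $a$ and $b$. The regularity hypotheses required by that theorem are supplied by our standing assumption $f, g \in \mathcal{C}^2(\mathbb{R}\times\mathbb{R}, \mathbb{R})$ (in particular, existence and uniqueness of a strong solution to the SDE on $K$ is not an issue). Theorem \ref{mainB} then asserts that $K$ is invariant if and only if $f(t,a) \geq 0$, $f(t,b) \leq 0$, and $g(t,x) = 0$ for $x \in \{a,b\}$, for all $t \geq 0$. Substituting back $a = c_0 + d_{\min}$ and $b = c_0 + d_{\max}$ yields exactly the three conditions displayed in the lemma.

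Since the proof amounts to a direct specialization of Theorem \ref{mainB}, no step is truly an obstacle. The only point that merits a brief remark in the writeup is the role of the third condition $g(t,a) = g(t,b) = 0$: it is what distinguishes the stochastic case from the deterministic one, and it encodes the fact that without turning off the noise at the boundary, the Brownian fluctuations would instantaneously push $c_t$ outside $K$ with positive probability, violating admissibility. Apart from that commentary, the proof is a one-line citation of Theorem \ref{mainB}.
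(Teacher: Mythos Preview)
Your proposal is correct and matches the paper's approach exactly: the paper presents this lemma immediately after Theorem~\ref{mainB} with the words ``As a consequence, we have,'' treating it as a direct specialization of the stochastic invariance theorem with $a = c_0 + d_{\min}$ and $b = c_0 + d_{\max}$, just as you do.
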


\subsubsection{A stochastic Toy-model}

In order to perform numerical simulations, we introduce an ad-hoc deformation defined by
\begin{align}
	f(x) & = \alpha\cos(\gamma t)(x-(c_0 + d_{min}))((c_0+d_{max})-x) \ , \quad e,\alpha \in \mathbb{R}^+ ,\\
	g(x) & = \beta(x-(c_0 + d_{min}))((c_0+d_{max})-x) \ , \beta \in \mathbb{R}^+ ,
\end{align}
where $\beta$ is a real number. The function $g$ is designed to reproduce the observed stochastic behavior of the flattening of the Earth. However, as pointed out in the introduction, we do not intend to produce an accurate model but mainly to study if such a model using stochastic processes leads to a good agreement on the shape of the polar motion. \\

The major axis $c_t$ satisfies the stochastic differential equation
\begin{align}
dc_t = &\alpha\cos(\gamma t)(c_t-(c_0 + d_{min}))((c_0+d_{max})-c_t)dt  \nonumber \\
&+ \beta(c_t-(c_0 + d_{min}))((c_0+d_{max})-c_t)dB_t \label{procInv} .
\end{align}

\section{Simulations of the Toy-model}

\subsection{Initial conditions}

All the simulations are done under the following set of initial conditions :
\begin{itemize}
	\item The semi-major axis of $\mathcal{E}$ : $a_0=1$, $c_0=\sqrt{\frac{298}{300}}$.
	\item Mass : $M_\mathcal{E}=1$.
	\item Volume : $V_\mathcal{E}=1$. 
	\item Rotation vector $\boldsymbol{\Omega}$ is chosen in the principal axis as $\boldsymbol{\Omega} = \left(5\times 10^{-7},0,1 \right)^\mathsf{T}$. 
	\item Upper variation $d_\text{max}=a_0-c_0$.
	\item Lower variation $d_\text{min}=-d_\text{max}$. 	
	\item Perturbation coefficients : $\alpha=10^{-3}$ and $\beta=10^{-4}$ with $\gamma =10$.
	
\end{itemize}
These initial conditions correspond to the Earth which rotate around its axis in about 300 days, oscillating with a circle of radius about 3 meters (see \cite{bizouard}, \cite{goldstein}). The perturbation coefficients and also the upper and lower variations are arbitrary. The reader can test different values of the initial conditions using the open-source Scilab program made by F. Pierret (see \cite{pierret3})

\subsection{Numerical scheme and the invariance property}
\label{num_inv}
As we do not perform simulations over a long time, we can use in the deterministic case the Euler scheme and in the stochastic case the Euler-Maruyama scheme. However, in each case a difficulty appears which is in fact present in many other domains of modeling (see for example \cite{cps1} and \cite{cps2}), namely the respect of the invariance condition under discretization. Indeed, even if the continuous model satisfies the invariance condition leading to an admissible deformation, the discrete quantity can sometimes produce unrealistic values leading to, for example, negative values of the major axis. Thanks to an appropriate choice (see \cite{pierret_nsem}) of the time step, it is possible (under some conditions) to obtain a numerical scheme satisfying the invariance property (with a probability which can be as close as we want to one in the stochastic case).\\

In the following, we denote by $h \in \mathbb{R}^+$ the time increment of the numerical scheme. For $n\in \mathbb{N}$, we denote by $t_n$ the discrete time defined by $t_n = nh$ and by $X_n$ the numerical solution compute at time $t_n$ with time step $h=10^{-4}$. In the simulations, the value of $a_0$ can be seen as the \emph{Earth's equatorial radius}, which allow us to display the variations in the oscillation with magnitude of order few centimeters. The simulations are performed over 1 day and 7 days in order to exhibit the random phenomena linked to the period of few days.

\subsubsection{Deterministic case}
In order to perform numerical simulations we use the Euler scheme. Let $X_t$ a smooth function such that
\begin{equation}
X_t=X_0+\int_0^t f(s,X_s)\,ds
\end{equation}
where $f \in \mathcal{C}^2(\mathbb{R}\times\mathbb{R},\mathbb{R})$. The associated Euler scheme associate is given by
\begin{equation}
X_{n+1} = X_n + f(t_n,X_n) h.
\end{equation}
Considering the Euler scheme associate with the stochastic Euler-Liouville equations \eqref{euler_liouville_sto} and the toy-model for the flattening \eqref{procInv} with only a deterministic variation, we display the difference between $c_t$ and its initial value $c_0$ in Figure \ref{fig:ctdet1}. We display also the difference between the zonal harmonic $J_2$ and its initial value $J_{2,0}$ in Figure \ref{fig:j2det1}. The difference is made to see the variation order as, for example, in \cite{cheng_variation_obl}.

\begin{figure}[ht!]
\resizebox{\textwidth}{!}{
	\includegraphics{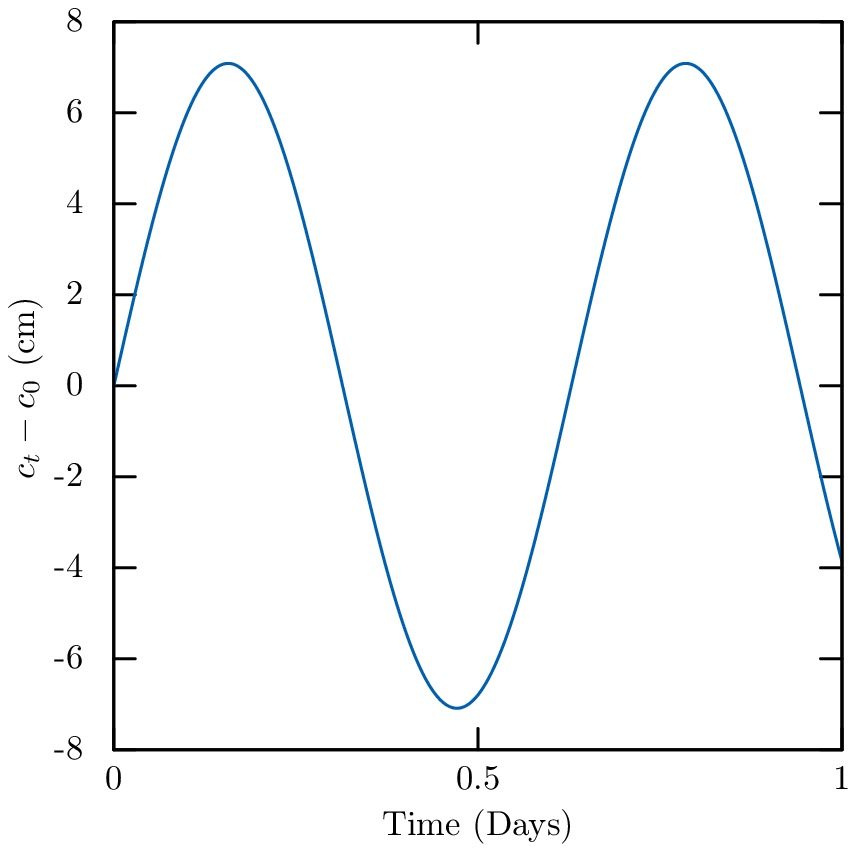}
	\includegraphics{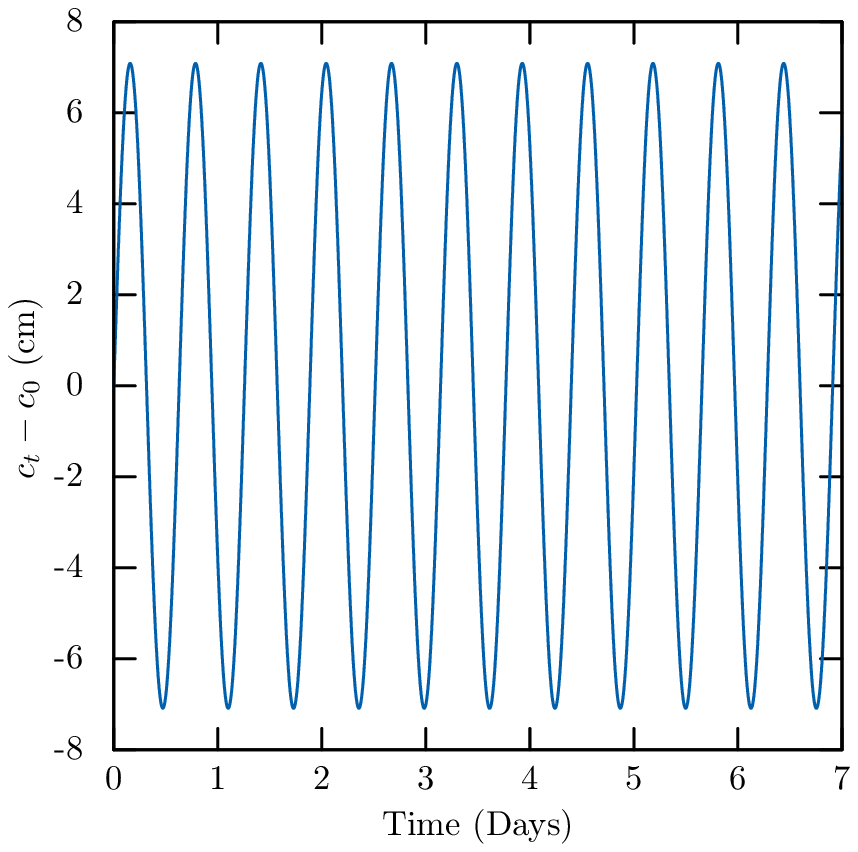}
}
\caption{Semi-major axis $c_t$}
\label{fig:ctdet1}
\end{figure}

\begin{figure}[ht!]
\resizebox{\textwidth}{!}{
	\includegraphics{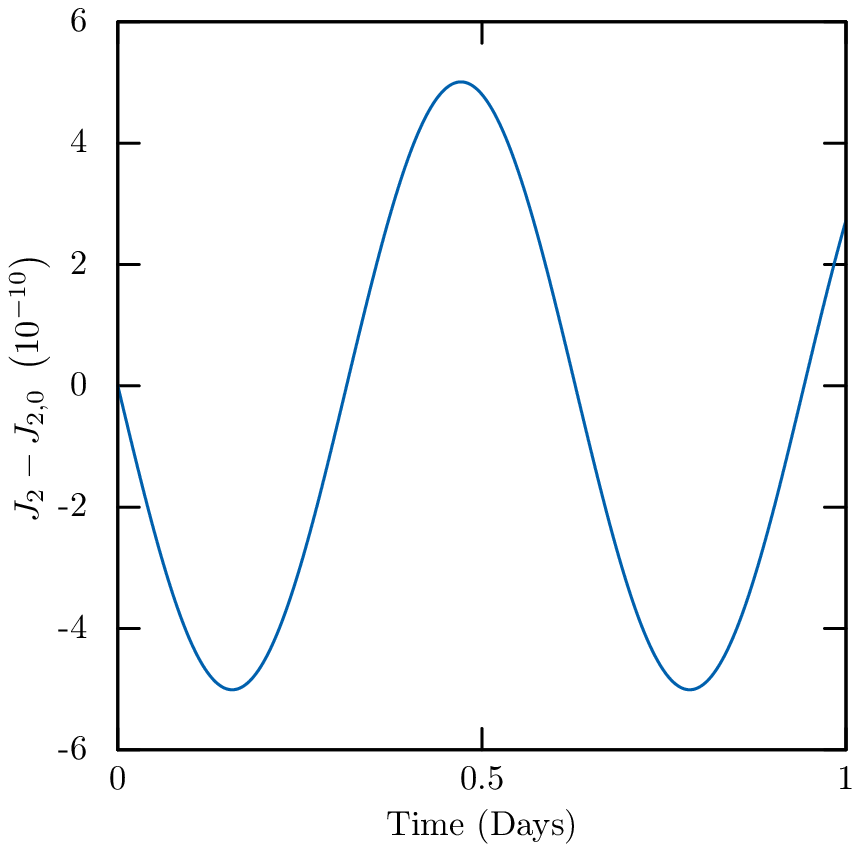}
	\includegraphics{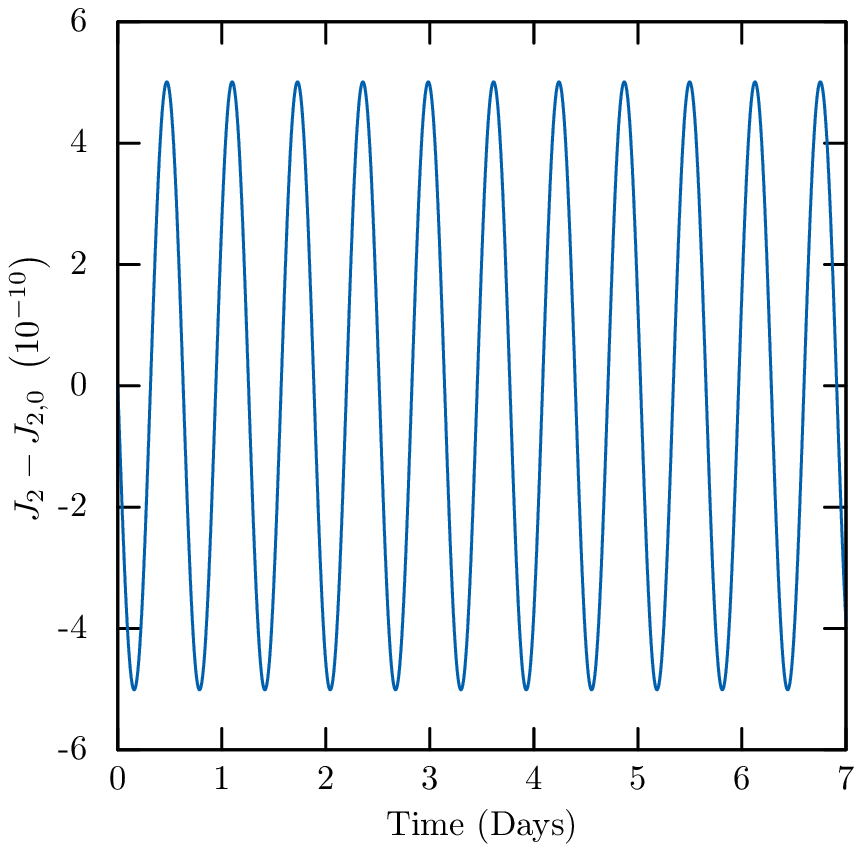}
}
\caption{Second zonal harmonic $J_2$}
\label{fig:j2det1}
\end{figure}

As it has been precised in the introduction, this model intends to introduce the deterministic part of the stochastic deformation. Such a model has to be replaced by the actual deterministic models, for example, the part with the well known periodic variations (see \cite{bizouard}, \cite{lambeck_1989}, \cite{sidorenkov2009}).

\subsubsection{Stochastic case}
In order to do numerical simulations we use the Euler-Maruyama scheme which is the stochastic counterpart to the Euler-Liouville scheme for deterministic differential equations (see \cite{higham}, \cite{kloeden1}). Let $X_t$ be a stochastic process written as

\begin{equation}
	X_t=X_0+\int_0^t f(s,X_s)\,ds+\int_0^t g(s,X_s)\,dB_s
\end{equation}
where $f,g \in \mathcal{C}^2(\mathbb{R}\times\mathbb{R},\mathbb{R})$. The Euler-Maruyama scheme is given by

\begin{equation}
	X_{n+1} = X_n + f(t_n,X_n) h + g(t_n,X_n) \Delta B_n,
\end{equation}
where $\Delta B_n$ is a Brownian increment which is normally distributed with mean zero and variance $h$ for all $n\ge 0$.\\

Considering the Euler-Maruyama scheme associate with the stochastic Euler-Liouville equations \eqref{euler_liouville_sto} and the toy-model for the flattening \eqref{procInv}, we display the difference between $c_t$ and its initial value $c_0$ in Figure \ref{fig:ctsto1}, and we display the zonal harmonic $J_2$ and its initial value $J_{2,0}$ in Figure \ref{fig:j2sto1}.

\begin{figure}[ht!]
\resizebox{\textwidth}{!}{
	\includegraphics{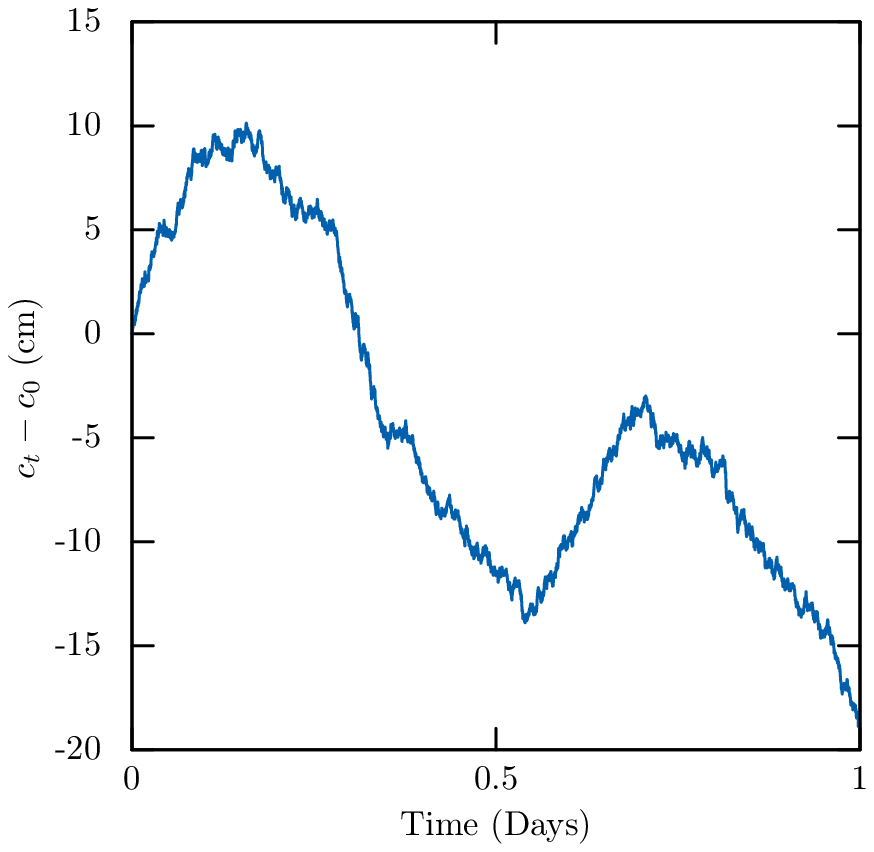}
	\includegraphics{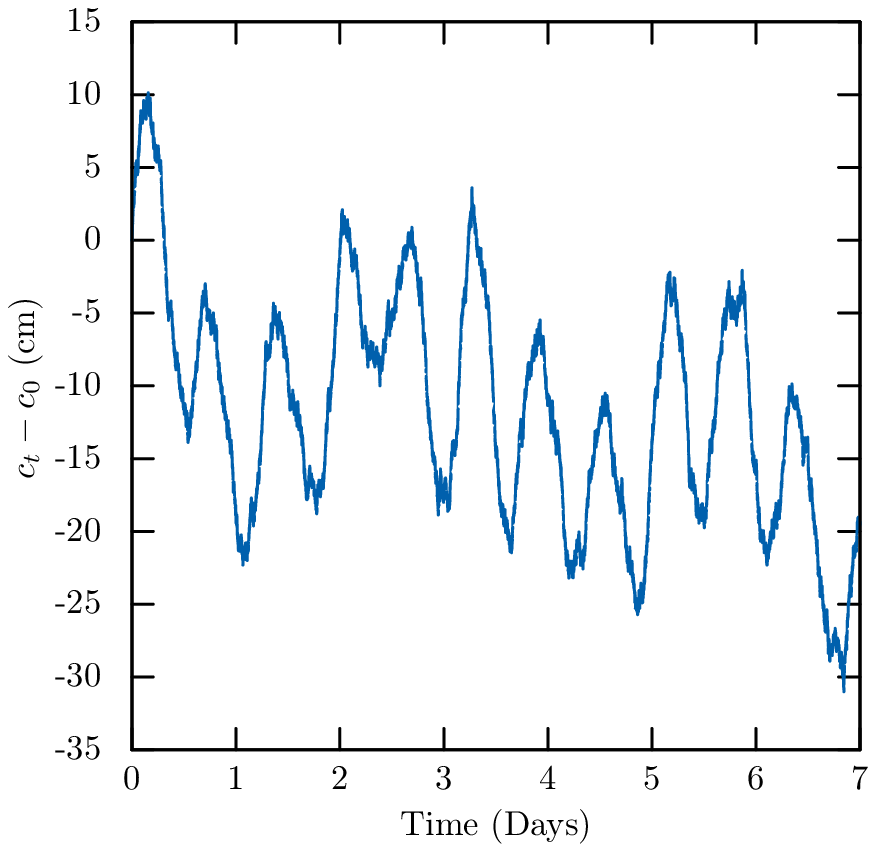}
}
\caption{Semi-major axis $c_t$}
\label{fig:ctsto1}
\end{figure}

\begin{figure}[ht!]
\resizebox{\textwidth}{!}{
	\includegraphics{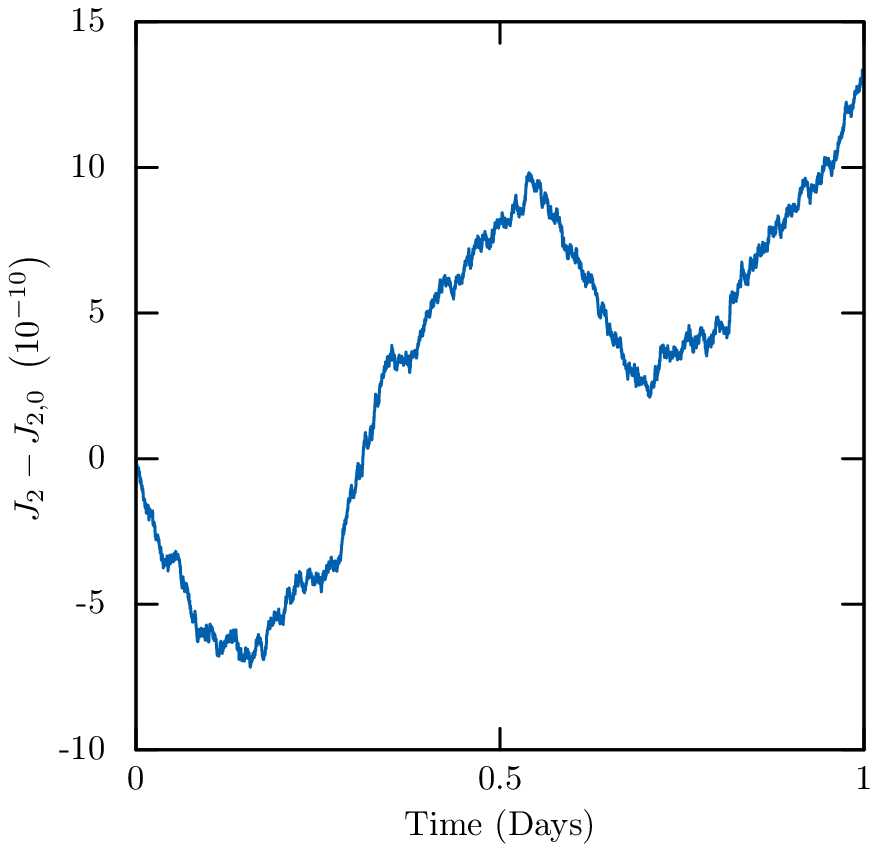}
	\includegraphics{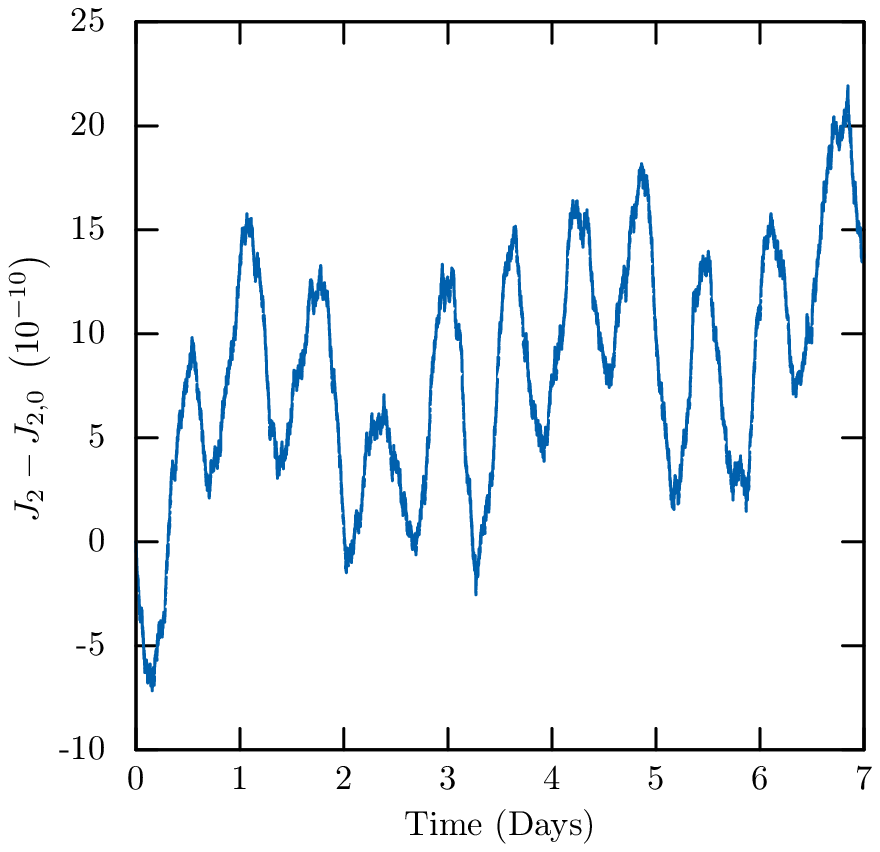}
}
\caption{Second zonal harmonic $J_2$}
\label{fig:j2sto1}
\end{figure}

Considering stochastic variations of the flattening, we can see that over short periods of time, there exists similarities between the general shape of the flattening curve obtained using simulations of the stochastic toy-model and the observational curves (see \cite{cheng_deceleration}, \cite{cheng_variation_obl}). It shows that the model seems to capture a part of the random effects which could be inside the real observations. \\

Nevertheless, the interpretation of the numerical illustrations with the observational data has to be understand as follows. Such a simple model does not intend to reproduce the actual behavior of the Earth but to show the strategy to interpret the random effects observed. Indeed, when analyzing the real data of the Earth, such as the second zonal harmonic $J_2$, one has to remove all the well known terms like the periodic ones and all others we know exactly quantify. Then, in the ``residual'' data, the observed noise has to be interpreted. Even if, we remove the noise induced by the measure process, this actual noise is very badly understand from the physics or the geological point of view. This exactly what we are showing in the numerical example. If one has periodic terms in its data and remove it, we would obtain a noisy signal which can be exactly identified and modeled as a stochastic variation of the flattening. \\

As precised in the introduction, stochastic variations in the flattening can explain why it is so difficult to predict the rotation motion over few days and also why the filtering methods seems to work but without providing the physical meaning and origin.

\section{Conclusion and perspectives}
This study is a first attempt to take into account stochastic variation of the shape of a body on its rotation through the geometric flattening. It shows that, if there exists a quantification of geophysical mechanisms or an intrinsic description of the ellipsoid on the stochastic variations of the flattening then, this work gives a method to deal with admissible stochastic variations under the assumptions (H1), (H2) et (H3). Thanks to this physical quantification, it allows exhibiting one of the many candidates of the Earth's stochastic rotation dynamics. The results encourage working in this direction. \\

Other mechanisms, such as the non-rigidity of the Earth, induce a major part in the rotation behavior (see \cite{bizouard}, \cite{lambeck1988}). With such a consideration, it follows that Earth's rotation axis is not described in the principal axis and, in consequence, the inertia matrix is not necessarily diagonal. Of course it is possible to adapt all the theoretical and numerical results of this work in such a situation and moreover with a body having a general shape. \\

Obviously the hypothesis on the stochastic nature of the deformation which should be a diffusion process seems to be too restrictive. Indeed, the real data suggest there is sometimes noise coloration (see \cite{bizouard}, \cite{markov2009}). Of course it is also possible to adapt all the results of this work with colored noise (see \cite{hanggi_jung} and \cite{riecke} for example for a short introduction to colored noise using the Ornstein-Uhlenbeck process and its white noise limit). Testing such a model with the real data and colored noise and also the actual model of the main perturbation of Earth's rotation such as the oceanic and atmospheric excitation will be the subject of a future paper. \\

One of the applications of this work, is to model a two-body problem perturbed by these stochastic variations of the flattening on its orbitals elements. In consequence, the satellite dynamics used to acquiring data can be investigated when doing a comparison between the data and the Earth's dynamics. Another application, suggested by one of the referee, is to consider such a stochastic approach to the deformation of the Earth and the Moon on the Moon's rotation vector during its formation process. Indeed, an idea due to R.W. Ward  (see \cite{ward}) suggests that the Cassini states, which were present, allowed for the Moon's rotation vector to undergo a radical 90 degree flip. It would be interesting to understand, using stochastic deformations, if such a reorientation of the Moon's rotation vector were possible. It could validate a fundamental question relating to the origin and evolution of the Earth-Moon system.

\section{Acknowledgment}
We would like to thank the reviewers for their insightful comments on the paper which led us to an improvement of this work. We would also like to thank S\'ebastien Lambert for his careful proofreading and discussions.

\end{document}